\documentclass{article}
\usepackage[english]{babel}
\usepackage[utf8]{inputenc}
\usepackage[bold,full]{complexity}

\usepackage{graphicx}
\usepackage{algorithm}
\usepackage[noend]{algpseudocode}

\title{$\cL$ is unequal $\NL$ under the Strong Exponential Time Hypothesis}
\makeatletter%
\@ifclassloaded{llncs}%
{\author{Reiner Czerwinski \orcidID{0000-0002-4523-4420}}
\institute{TU Berlin (Alumnus)\\
  \email{Reiner.Czerwinski@posteo.de}}
\titlerunning{$\cL\not=\NL$ under the SETH}
\authorrunning{R. Czerwinski}

}%
{\author{Reiner Czerwinsk}
  \usepackage{mypaper}
\usepackage[pdfstartview=FitH,pdfpagemode=UseNone,colorlinks=true,citecolor=blue,linkcolor=blue]{hyperref}}%
  
\begin{document}


\makeatother%

\maketitle
\ifdefined\text
\else
\newcommand{\text}[1]{\ensuremath{\textrm{#1}}}
\fi
\newlang{\stCON}{st-CON}
\newlang{\CNF}{CNF}
\newcommand{\kCNF}{\ensuremath{k-\CNF}}
\newcommand{\UNL}{\ensuremath{U_{\NL}}}
\newcommand{\UNP}{\ensuremath{U_{\NP}}}

  \begin{abstract}
     Due to Savitch's theorem we know $\NL\subseteq\DSPACE(\log^2(n))$.
    To show this upper bound, Savitch constructed an algorithm with
    $O(\log^2(n))$ space on the working tape. We will show that
    Savitch's algorithm also described a lower bound under the
    Strong Exponential Time Hypothesis. Every algorithm
    for the Connectivity Problem needs $O(\log^2(n))$ space in this case.
  \end{abstract}

\section{Introduction}
There are well-known upper bounds in the relation between nondeterministic
and deterministic complexity classes. It is obvious that
$\NSPACE(t)\subseteq\DTIME(2^{O(t)})$, and, due to Savitch's theorem\cite{SAVITCH},
$\NSPACE(s)\subseteq\DTIME(s^2)$.
In this paper, we analyze, whether these bounds are also lower bounds
under the Strong Exponential Time Hypothesis (SETH).

The proof of Savitch's theorem is constructive. There is an algorithm
to find a path between two vertices in a digraph, which needs $O(\log^2(n))$
space. To show that Savitch's algorithm is optimal, we use the following approach:
First, we will prove that
for each algorithm there is a graph with $n$ vertices where the algorithm needs space to
mark more than one vertex.
With this graph, we will construct a graph with $n^l$ vertices, where we have to mark $l$
vertices.

In section \ref{sec:ntm} we introduce nondeterministic TMs (NTM).
We need the NTM to analyze the effect of SETH on \NL-complete problems in section
 \ref{sec:SETH}. 

In section \ref{sec:savatich} , we will show that the algorithm described in Savitch's theorem has
minimal space complexity when SETH is valid. So, SETH implies $\cL\not=\NL$. 
\section{Nondeterministic Turing Machines} \label{sec:ntm}
In this paper, we use a nondeterministic Turing machine (NTM)
with an additional certificate tape\cite[p.84-86]{arora2009computational}.
An NTM has a read-only input tape, a read-once certificate tape,
one working tape, and an output tape.
W.l.o.g., the head of the certificate tape moves one cell to the right on each step of
the NTM.
An input $x$ is accepted by the NTM $M$, i.e $x\in L(M)$, if there is a certificate $c\in\{0,1\}^*$ 
with $M(x,c)=1$.

\includegraphics[width=10cm]{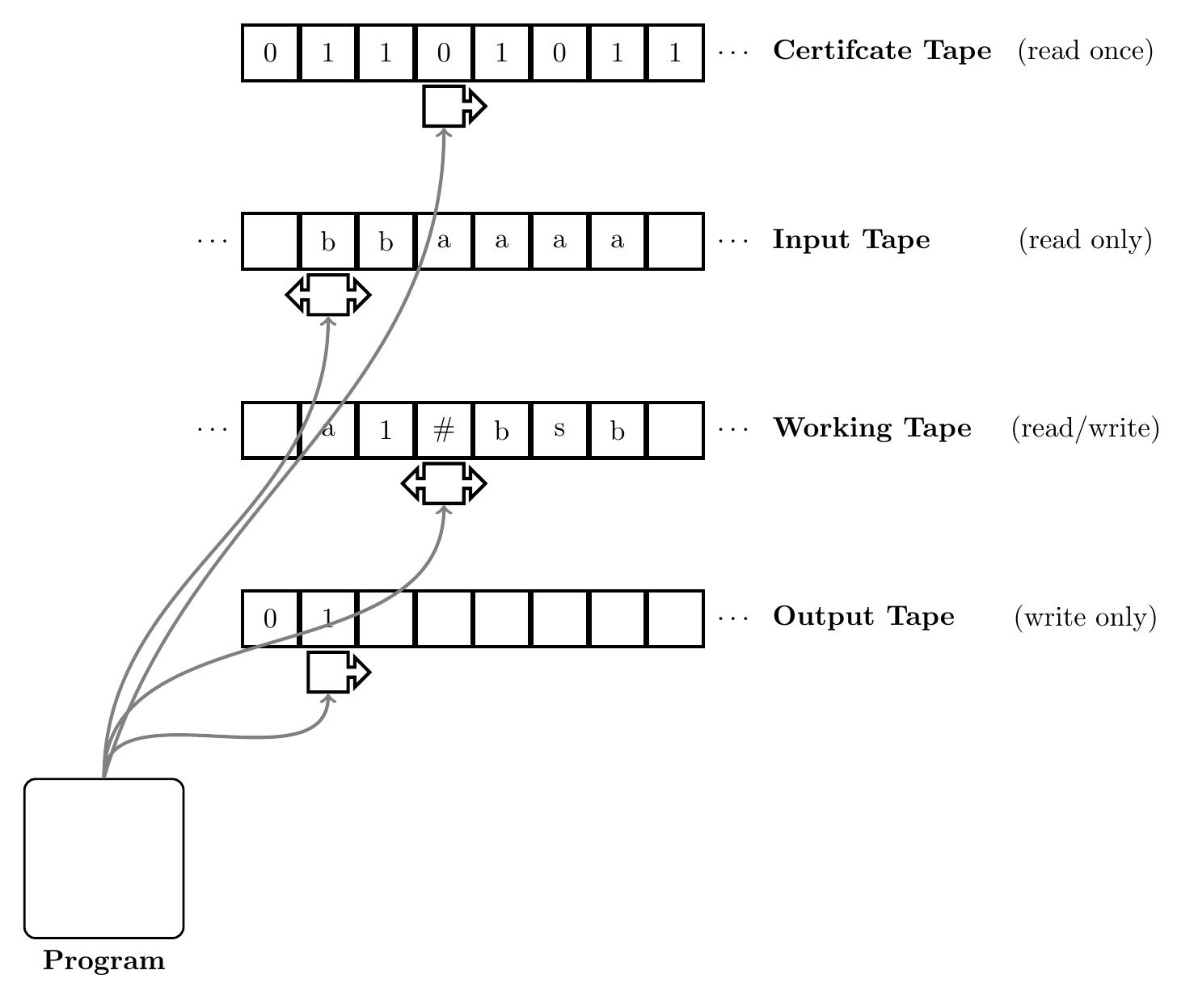}

The configuration is the information one needs to continue
the calculation of an NTM from a specific moment.
It is defined by the words on the input tape and the working tape
and the head position of the input and working tape and the state of the NTM.
The configuration does not depend on the content of the certificate tape.

The configuration of an NTM is denoted by the content of the working and output tape, and the position of the heads from the input, working, and output tape.

The configuration graph is a digraph where the vertices are
possible configurations of the NTM. Two configurations are connected with
a directed edge if one configuration can be transformed into the other
within one step of the NTM. If the out-degree of a vertex of the configuration
graph is more than one, the possible edges are distinguished by the
content of the current cell on the certificate tape. So,
the certificate describes a path in the configuration graph
but is not part of the configuration.




%
%


\section{ The Strong Exponential Time Hypothesis}\label{sec:SETH}

The Exponential Time Hypothesis implies that a deterministic algorithm for $\SAT$ needs exponential run time.
If the Strong Exponential Time Hypothesis (SETH) is valid, then an algorithm for an \SAT-formula in $\kCNF$
form with $n$ variables needs $O(2^n)$ run time in the worst case when $k\to\infty$\cite{SETH}.

The SETH is related to NTMs via the Cook--Levin theorem\cite{Cook71}.
We define
\begin{equation}
  \UNP = \{ \langle N,x,1^k  \rangle \mid \text{ NTM }N\text{ accepts }x\text{ within }k\text{ steps} \} 
\end{equation}
\begin{equation}
\UNL =\{ \langle N,x,1^k  \rangle \mid \text{ NTM }N\text{ accepts }x\text{ within }\log(k)\text{ working space} \}\text{.} 
\end{equation}

Obviously  the set \UNP{} is \NP-complete, and the set \UNL{} is \NL-complete.
The Cook--Levin theorem uses a p-many-one reduction from \UNP{} to $\SAT$. With this proof technique we can analyze \UNL{} under SETH.

\begin{lemma}\label{allpaths}
  Assume the Strong Exponential Time Hypothesis. 
 Then we have to check every possible path beginning with the start configuration of the configuration graph. 
\end{lemma}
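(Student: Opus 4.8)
The plan is to argue by contradiction, running the Cook--Levin reduction ``backwards'' so that a $k$-CNF formula is encoded by a configuration graph whose start-configuration paths are exactly its truth assignments. First I would fix a $k$-CNF formula $\phi$ with $n$ variables and build an NTM $N$ whose nondeterministic choices on a fixed input encode precisely the $2^n$ assignments of $\phi$: on each of its first $n$ steps $N$ reads one bit from the read-once certificate tape, and thereafter it runs deterministically, checking the guessed assignment against $\phi$ and entering the accepting configuration exactly when that assignment satisfies $\phi$. Then in the configuration graph $G$ of $N$ the paths from the start configuration of length at least $n$ are in bijection with the $2^n$ assignments, each extended by a deterministic verification tail, and the start configuration reaches the accepting configuration in $G$ if and only if $\phi$ is satisfiable.

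Second I would show that any procedure deciding reachability in such graphs must, on at least one $\phi$, inspect all $2^n$ start-configuration paths. Suppose not; then for some $\phi$ the procedure commits to an answer while some assignment $a$ is still untested. Modify $\phi$ into $\phi'$ by conjoining exactly the clause violated only by $a$ (or, if $\phi$ already rejects $a$, by deleting such a clause), so that $\phi$ and $\phi'$ differ in satisfiability but induce configuration graphs agreeing on everything the procedure examined; the procedure is then wrong on one of $\phi,\phi'$. Hence for every $\phi$ all $2^n$ paths must be checked, which forces running time at least $2^n$ as $k\to\infty$ --- exactly the bound SETH declares unavoidable for $k$-CNF-SAT --- and, conversely, a procedure that could skip even one path would beat the SETH bound. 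Either way, checking every path from the start configuration is forced.

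The step I expect to be the main obstacle is the first one: engineering $G$ so that its only genuine nondeterminism is the $n$-bit guess, which requires the verification tail to be truly deterministic and the encoding of $\phi$ into $N$ to respect the read-once certificate tape while keeping the guessed assignment consistent across all clauses; and, in parallel, making precise what ``check every possible path'' should mean so that the adversary argument of the second step is airtight. A secondary subtlety is matching the $2^n$ count to SETH's $k\to\infty$ formulation, i.e.\ ensuring the reduction does not silently inflate the clause width or the number of variables.
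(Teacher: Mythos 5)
Your first step cannot carry the weight you put on it, and the obstacle you flag at the end is fatal rather than technical. To keep the $2^n$ assignments in bijection with start-configuration paths, your machine $N$ must remember the guessed assignment while it checks the clauses: the certificate tape is read-once, so consistency across clauses forces the assignment onto the work tape, i.e.\ $\Omega(n)$ workspace. The configuration graph then has $2^{\Omega(n)}$ vertices, exponentially larger than $\phi$. At that point ``all $2^n$ start paths must be checked'' is at best a statement linear in the size of the graph; it neither needs nor uses SETH, and it says nothing about the situation in which the lemma is later invoked, where the configuration graph comes from a logspace machine (the reduction from $\UNL$ to $\stCON$) and is polynomial in the input. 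If instead you insist on $O(\log n)$ workspace, the bijection between paths and assignments collapses, because the machine cannot verify that the certificate bits used for different clauses are consistent. The paper avoids this by running Cook--Levin in the opposite direction (NTM $\to$ CNF, with additional variables $S_W(t,p)$ encoding the space bound), so that SETH is applied to the formula obtained from the machine, rather than building a machine from a formula as you do.

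The second step is also unsound as stated. Your adversary argument never uses SETH and purports to show unconditionally that every procedure must inspect all $2^n$ assignments for every $\phi$. That is false: for each fixed $k$ there are $k$-SAT algorithms running in time $2^{c_k n}$ with $c_k<1$ (Sch\"oning, PPSZ); SETH only asserts $c_k\to 1$ as $k\to\infty$, a statement about limiting exponents, not about having to touch every assignment. Concretely, the swap from $\phi$ to $\phi'$ does not yield indistinguishable instances: adding or removing a clause changes the deterministic verification part of the configuration graph, which the procedure may already have read, so the claim that the two graphs ``agree on everything the procedure examined'' fails, and an algorithm for reachability is under no obligation to query assignments one at a time in the first place. (For comparison, the paper's own argument is itself only a sketch, but it at least states the conclusion as a consequence of SETH instead of attempting this stronger unconditional counting claim.)
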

\begin{proof}
  We use the reduction from the Cook-Levin theorem. The configuration at every step
  during the run time of the NTM is described by Boolean variables.
  The variable $T_W(t,p,s)$ is true if the symbol $s$ is on the working space tape on $p$ at step $t$.
  The variable $H_W(t,p)$ is true if the head position of the working space tape is on $p$ at step $t$.
  And there are similar variables for the other heads.
  
  Additionally, we get clauses from the space limitation.
  For this reason we introduce the Boolean variables $S_W(t,p)$, which is true when the head of the
  working space tape has passed the position $p$ before step $t$.
  It is $S_W(t+1,p) = S_W(t,p) \lor H_W(t,p)$. If $k$ is the maximal tape space, then
  we add the clauses$(\lnot S_W(t,p) \lor \lnot S_W(t,p+k) ) $ for all $t$,$p$.

  Assume SETH. If the NTM has not used the whole tape space, then an algorithm has to check all paths.
\end{proof}

Although we have to check all paths, an algorithm for $\UNL$ can be accelerated by
memorizing some configurations.


\section{Savitch's Algorithm is optimal}\label{sec:savatich}
An \NL-complete problem is $\stCON$.
Is there a path between two given points of a digraph?
This problem is in $\NL$ because an NTM needs only the working space size for one point,
i.e. $\log(n)$, where $n$ is the number of vertices.
The problem is \NL-hard because there is a reduction from $\UNL$ to $\stCON$
by writing the configuration graph.

An \stCON algorithm has to write some information about the digraph on the working space.
W.l.o.g., it can store information by marking some vertices.
The digraph consists of vertices and directed edges. Each edge 
is a tuple of two vertices.
If one marks a vertex by writing it on the working tape, then one needs 
$O(\log(n))$ space.

\begin{lemma}\label{twovertices}
  Assume SETH.
  For each algorithm that solves $\stCON$,
  there is a digraph where at least two vertices are marked at one moment.
\end{lemma}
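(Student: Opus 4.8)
The plan is to argue by contradiction. Suppose $A$ is an algorithm that decides $\stCON$ and, on \emph{every} input digraph, never has two vertices marked simultaneously. Under the normalization already adopted in this section — that all information about the digraph is recorded by marking vertices — this means that at any moment the only part of $A$'s working tape depending on the digraph is a single vertex name, together with $O(1)$ further bits of bookkeeping; morally, $A$ is a bounded-memory walk on its input digraph. I will feed $A$ the configuration graph of a hard NTM supplied by Lemma~\ref{allpaths} and show that a bounded-memory walk cannot carry out the exhaustive path search that SETH forces, a contradiction.

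Concretely I would proceed as follows. First, fix an NTM $N$ and an input $x$ whose configuration graph $G_N$ is of the kind in Lemma~\ref{allpaths}, so that under SETH every path starting at the start configuration must be inspected; since $N$ branches at essentially every step and runs for time polynomial in $n=|V(G_N)|$, the number of such paths is super-polynomial, say $2^{n^{\Omega(1)}}$. Running $A$ on $G_N$, with source the start configuration and target an accepting configuration, decides whether $N$ accepts $x$, by the reduction from $\UNL$ to $\stCON$. Second, observe that the part of $A$'s configuration that depends on $G_N$ is a pair $(v,a)$, where $v$ is the unique marked vertex and $a\in\{0,1\}^{O(1)}$; there are at most $n\cdot 2^{O(1)}=O(n)$ such \emph{graph-states}. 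Third, since $A$ is deterministic and must halt, its run enters at most $O(n)$ graph-states before halting, for otherwise it repeats one and loops forever, contradicting that it is an algorithm. Fourth, inspecting a path forces $A$, at some step, into a graph-state whose marked vertex is that path's current endpoint, and paths that branched earlier but have not yet been separated cannot be told apart by $(v,a)$; a pigeonhole count then shows that $O(n)$ graph-states cannot witness the inspection of $2^{n^{\Omega(1)}}$ paths, contradicting Lemma~\ref{allpaths}. Hence no such $A$ exists, and for every $\stCON$ algorithm there is a digraph forcing at least two simultaneously marked vertices.

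The step I expect to be the main obstacle is the fourth one: converting the qualitative conclusion of Lemma~\ref{allpaths} (``we have to check every possible path'') into a concrete, countable resource that a one-vertex algorithm provably lacks — in particular, pinning down exactly what state a ``checked'' path must leave behind, and ensuring the pigeonhole collision is between paths that genuinely must be distinguished rather than harmless duplicates. A secondary difficulty is that ruling out one-vertex algorithms on \emph{all} digraphs is already as strong as $\cL\neq\NL$ for directed reachability, so the argument must lean essentially on the SETH-specific content of Lemma~\ref{allpaths} and cannot be purely combinatorial; and one must still justify the ``store information by marking vertices'' normalization, checking in particular that the auxiliary state $a$ cannot smuggle in a second vertex name in disguise.
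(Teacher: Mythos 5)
Your proposal follows essentially the same route as the paper's own proof: reduce $\UNL$ to $\stCON$ via the configuration graph, invoke Lemma~\ref{allpaths} to force exhaustive checking of all paths under SETH, and conclude that an algorithm marking only one vertex at a time cannot carry out that search --- the paper compresses your pigeonhole count into the remark that such an algorithm would need exponential time and hence superlogarithmic working space, so your version is just a more explicit rendering of the same step (and it candidly flags the same weak points, namely the ``marking vertices'' normalization and the conversion of Lemma~\ref{allpaths} into a countable resource). One small repair: repeating a graph-state $(v,a)$ does not by itself force an infinite loop, since the input-head position is also part of the machine's configuration; counting full configurations instead still gives only polynomially many, which yields the same contradiction with the $2^{n^{\Omega(1)}}$ paths to be inspected.
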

\begin{proof}
  The problem $\stCON$ is $\NL$-complete. So, there is a reduction from
  $\UNL$ to $\stCON$.
  We construct the configuration graph with a logspace-transmuter.
  Due to theorem \ref{allpaths}, a deterministic algorithm has to test
  all possible paths from the start configuration to an accepted state
  in the
  worst case. If the algorithm marks only one vertex each time, it has to
  traverse every path. In this case, the run time is exponential. So, the working
  space grows faster than logarithmic.

  So, there is a graph where the algorithm has to mark more than one vertex
  to shorten the computation time.
\end{proof}

\begin{definition}[st-substitution]
  There is a digraph $G$ with vertex $v$ and a digraph
  $H$ with vertices $s_H$ and $t_H$. The vertex $v$ will be substituted
  by $H$ when every vertex adjacent to $v$ will be adjacent to $s_H$
  and  every vertex adjacent from $v$ will be adjacent from $s_H$.  
\end{definition}

When an algorithm for $\stCON$ would mark the substituted vertex in
the origin digraph, then it would solve $\stCON$ in the subgraph that
substituted the vertex.
So, we can generalize lemma \ref{twovertices}:
\begin{lemma} \label{morevertices}
  Assume SETH.
  For an arbitrary deterministic algorithm for $\stCON$,
  there is a digraph with the following property:
  If one st-substitutes each vertex with a subdigraph,
  then there is a moment during the run time
  where the algorithm solves $\stCON$ on one of these subdigraphs
  and a vertex outside this subdigraph is marked.
\end{lemma}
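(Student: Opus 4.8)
The plan is to bootstrap Lemma~\ref{twovertices} through the st-substitution, exactly along the lines of the remark preceding the statement. First I would feed the given deterministic algorithm $A$ into Lemma~\ref{twovertices} and obtain a digraph $G$ together with a moment in the run of $A$ on $G$ at which two distinct vertices $u$ and $v$ are simultaneously marked. This $G$ is the witness digraph whose existence the lemma claims.

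Now st-substitute every vertex $w$ of $G$ by a subdigraph $H_w$ with entry $s_{H_w}$ and exit $t_{H_w}$, obtaining $G'$. By the definition of st-substitution, a path from $s$ to $t$ in $G$ that uses $w$ corresponds to a path in $G'$ that traverses $H_w$ from $s_{H_w}$ to $t_{H_w}$, and such a traversal exists only if $s_{H_w}$ reaches $t_{H_w}$ inside $H_w$. Consequently, as the remark observes, whatever $A$ does on $G$ in order to certify that a path may pass through $w$ turns, when $A$ is run on $G'$, into work that decides $\stCON$ on the instance $(H_w, s_{H_w}, t_{H_w})$. I would set up a correspondence between the run of $A$ on $G$ and the run of $A$ on $G'$ under which the moment supplied by Lemma~\ref{twovertices} is matched to a moment at which $A$ is in the middle of resolving connectivity inside $H_v$ --- i.e.\ solving $\stCON$ on that subdigraph --- while some vertex of $H_u$, the image of the still-marked vertex $u$, remains marked. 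Since the subdigraphs $H_u$ and $H_v$ are vertex-disjoint, that marked vertex lies outside $H_v$, which is precisely the asserted property.

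The hard part will be making this correspondence rigorous, because $A$ on $G$ and $A$ on $G'$ are literally different computations, so the moment from Lemma~\ref{twovertices} cannot simply be transported. What I would really prove is the contrapositive: if for \emph{every} subdigraph $H_w$ the algorithm kept nothing marked outside $H_w$ while resolving $\stCON$ inside $H_w$, then $A$ would be solving each subinstance $(H_w, s_{H_w}, t_{H_w})$ in isolation, recycling the same logarithmic block of cells; gluing these isolated computations back into the outer computation on $G$ would again yield a one-vertex-at-a-time solver for a graph of the type excluded in Lemma~\ref{twovertices}, hence an exponential-time algorithm, contradicting SETH via Lemma~\ref{allpaths}. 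So the genuine content is that the space lower bound of Lemma~\ref{twovertices} is stable under st-substitution, and the delicate estimate is how much space the substitution can let $A$ reuse; everything else is bookkeeping about how marks in $G'$ localize inside the substituted subdigraphs.
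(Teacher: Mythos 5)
Your proposal takes essentially the same route as the paper: the paper's own proof is nothing more than the observation that, by Lemma~\ref{twovertices}, the number of paths to be checked grows exponentially in the number of pre-substitution vertices, so at some moment the algorithm must hold a mark outside the subdigraph it is currently solving $\stCON$ on --- which is exactly your contrapositive argument that handling every substituted subdigraph in isolation would recreate the one-vertex-at-a-time, exponential-time behaviour excluded under SETH via Lemma~\ref{allpaths}. If anything, you are more explicit than the paper about the weak point (relating the run on $G$ to the run on the substituted graph), which the paper's two-sentence proof simply asserts away.
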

\begin{proof}
  As mentioned in lemma \ref{twovertices}, the number of paths,
  that an algorithm has to check can grow exponentially in
  relation to the number of vertices before the substitutions.
  So, there is a moment during the run time when the algorithm
  computes on a substituting subdigraph and has marked a vertex
  outside this subdigraph.
\end{proof}
Now we use the st-substitution to construct a digraph, where more
than two vertices have to be marked at some moment.

\begin{algorithm}
\caption{Generation of the graph $G^l$}
\begin{algorithmic}
  \State $K \gets G$
  \If {$l > 1$}
  \For {$v \in V(K)$}
  \State substitute $v$ with $G^{l-1}$
  \EndFor
  \EndIf
  \State $G^l \gets K$
\end{algorithmic}
\end{algorithm}
If the graph $G$ has $n$ vertices, then $G^l$ has $n^l$ vertices.
\begin{lemma}
  If SETH, then
  for an arbitrary algorithm to solve $\stCON$ there is a digraph $G$,
  where $A$ will mark at least $l+1$ vertices at one moment
  on input $G^l$.
\end{lemma}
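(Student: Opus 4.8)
The plan is to argue by induction on $l$, feeding the construction of $G^l$ into Lemma~\ref{morevertices}. The base case $l=1$ is exactly Lemma~\ref{twovertices}: for the given algorithm there is a digraph $G=G^1$ on which at least $2=l+1$ vertices are marked at one moment. For the inductive step I would assume that every deterministic $\stCON$-algorithm has a witness digraph on which it marks at least $l$ vertices simultaneously when it is run on the $(l-1)$-fold construction of that digraph, and then upgrade this to $l+1$ on the $l$-fold construction.

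For the step, fix a deterministic $\stCON$-algorithm $A$ and apply Lemma~\ref{morevertices} to obtain a digraph $G$ such that, whenever every vertex of $G$ is st-substituted by a subdigraph, there is a moment during the run of $A$ at which $A$ is deciding $\stCON$ on one substituted copy $H$ while some vertex outside $H$ stays marked. Take precisely the substitution used in the generation of $G^l$, namely a copy of $G^{l-1}$ at every vertex, so $H\cong G^{l-1}$. During the sub-computation on $H$ the algorithm $A$ behaves as a deterministic procedure $A'$ that decides $\stCON$ on $H$; applying the induction hypothesis to $A'$ gives a moment of that sub-computation at which at least $l$ vertices of $H$ are marked. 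Since the outside vertex remains marked throughout that sub-computation, at that moment $A$ has at least $l+1$ vertices marked on input $G^l$. Note that SETH is invoked only through Lemmas~\ref{twovertices} and~\ref{morevertices}, so the hypothesis is used correctly and transparently.

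The delicate point --- and the step I expect to be the main obstacle --- is that the witness digraph in Lemmas~\ref{twovertices} and~\ref{morevertices} depends on the algorithm, whereas $G^l$ has a rigid shape built from one fixed $G$. The induction hypothesis supplies a witness for the induced procedure $A'$, but the construction forces the substituted copy to be the predetermined $G^{l-1}$; reconciling these requires either a self-consistent choice of $G$ (a $G$ whose $(l-1)$-fold build is already a witness for the procedure that $A$ induces inside $G^l$) or a refined construction that picks, at each recursion level, the substituted graph to be hard for the algorithm induced there rather than reusing a single $G$ throughout. Either way, this matching is where the proof needs real care. The remaining bookkeeping is routine: the $l$ inner marks are disjoint from the outer one because they lie inside $H$ while the outer one lies outside $H$, and they coexist with it in time because the outer mark persists for the whole sub-computation on $H$.
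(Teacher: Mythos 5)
Your argument is essentially the paper's own proof: induction on $l$, base case from Lemma~\ref{twovertices}, and for the step Lemma~\ref{morevertices} supplies a marked vertex outside the substituted copy while the induction hypothesis is applied to the sub-computation on that copy of $G^{l-1}$, yielding $l+1$ simultaneous marks. The delicate point you flag --- that the inductive witness graph depends on the induced algorithm, whereas the construction reuses one fixed $G$ at every level --- is genuine, but the paper's proof does not resolve it either; it simply asserts that the induction hypothesis applies to the predetermined copy.
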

\begin{proof}
  We proof this lemma with induction over $l$.
  Due to lemma \ref{twovertices}, there is a digraph $G$ where
  the algorithm $A$ has to mark at least two vertices at one moment.
  If $l=1$, then $G^l=G$. So, there is a moment at the run time of $A$
  where it has marked  at least two vertices.

  For the graph $G^{l+1}$ there is a digraph $K$ isomorphic to $G$,
  where every vertex will be substituted by $G^{l}$.
  If $v$ is a vertex of $K$, then $G^{l}(v)$ is the subdigraph,
  that will substitute $v$. So, an \stCON{} algorithm has to
  solve \stCON{} on the subdigraph $G^{l}(v)$ instead of marking
  $v$. By induction hypothesis one has to mark $l+1$ vertices
  of this subdigraph. But $K$ is isomorphic to $G$.
  Due to theorem \ref{morevertices}, there is
  a vertex $v$, where one also has to mark a vertex outside
  of $G^{l}(v)$ when calculating \stCON{} on this subdigraph.
\end{proof}
The graph $G^l$ has $N=n^l$ vertices, and the algorithm $A$ needs
working space to mark $l+1$ vertices. One vertex needs $\Theta(\log(N))$
space. So, we need $\Theta(\log^2(N))$ working space.
The SETH implies that Savitch's Algorithm is optimal.

\bibliographystyle{plane}
\bibliography{lit}

\begin{thebibliography}{1}
\providecommand{\url}[1]{\texttt{#1}}
\providecommand{\urlprefix}{URL }
\providecommand{\doi}[1]{https://doi.org/#1}

\bibitem{arora2009computational}
Arora, S., Barak, B.: Computational Complexity: A Modern Approach. Cambridge
  University Press (2009), \url{https://books.google.de/books?id=nGvI7cOuOOQC}

\bibitem{Cook71}
Cook, S.A.: The complexity of theorem-proving procedures. In: Proceedings of
  the Third Annual ACM Symposium on Theory of Computing. p. 151–158. STOC
  '71, Association for Computing Machinery, New York, NY, USA (1971).
  \doi{10.1145/800157.805047}, \url{https://doi.org/10.1145/800157.805047}

\bibitem{SETH}
Impagliazzo, R., Paturi, R.: Complexity of k-sat. In: Proceedings. Fourteenth
  Annual IEEE Conference on Computational Complexity (Formerly: Structure in
  Complexity Theory Conference) (Cat.No.99CB36317). pp. 237--240 (1999).
  \doi{10.1109/CCC.1999.766282}

\bibitem{SAVITCH}
Savitch, W.J.: Relationships between nondeterministic and deterministic tape
  complexities. Journal of Computer and System Sciences  \textbf{4}(2),  177 --
  192 (1970). \doi{https://doi.org/10.1016/S0022-0000(70)80006-X},
  \url{http://www.sciencedirect.com/science/article/pii/S002200007080006X}

\end{thebibliography}

\end{document}